\title{\LARGE \bf
Pricing Traffic Networks with Mixed Vehicle Autonomy}
\newtheorem{theorem}{Theorem}
\newtheorem{proposition}{Proposition}
\theoremstyle{definition}
\newtheorem{example}{Example}
\newtheorem{definition}{Definition}
\theoremstyle{remark}
\newtheorem{remark}{Remark}
\newcounter{tmp}
\newcommand{\cc}[1]{\mathcal{#1}}
\newcommand{\PP}{\mathcal{P}}
\newcommand{\LL}{\mathcal{L}}
\newcommand{\NN}{\mathcal{N}}
\newcommand{\WW}{\mathcal{W}}
\newcommand{\T}{\tau}
\author{Negar Mehr%
\thanks{{Negar Mehr is with the Mechanical Engineering Department, University of California, Berkeley, 5146 Etcheverry Hall, Berkeley, CA, USA
    {\tt\small negar.mehr@berkeley.edu}}}
 , and Roberto Horowitz% <-this % stops a space
\thanks{Roberto Horowitz is with the Mechanical Engineering Department of University of California, Berkeley, 6143 Etcheverry Hall Berkeley, CA, USA
        {\tt\small horowitz@me.berkeley.edu}}%
}
\newcommand{\pedit}[1]{{#1}}
\newcommand{\pcomment}[1]{}
\begin{document}

\maketitle
%\thispagestyle{empty}
%\pagestyle{empty}

%%%%%%%%%%%%%%%%%%%%%%%%%%%%%%%%%%%%%%%%%%%%%%%%%%%%%%%%%%%%%%%%%%%%%%%%%%%%%%%%
\begin{abstract}
In a traffic network, vehicles normally select their routes selfishly.
Consequently, traffic networks normally operate at an equilibrium characterized
by Wardrop conditions. However, it is well known that equilibria are inefficient
in general. In addition to the intrinsic inefficiency of equilibria, the authors
recently showed that, \pedit{in mixed--autonomy networks in which  autonomous
  vehicles maintain a shorter headway \pcomment{but headway seems to be
    uncountable} than human--driven cars, increasing the
  fraction of autonomous vehicles in the network}
%in a mixed--autonomy network, if compared to human--driven
%vehicles, autonomous vehicles maintain a shorter distance with the car in front
%of them, replacing a fraction of human--driven vehicles by autonomous vehicles
may increase the inefficiency of equilibria. In this work, we study the
possibility of obviating the inefficiency of equilibria in mixed--autonomy
traffic networks via pricing mechanisms. In particular, we study assigning
prices to network links such that  the overall or social delay of the resulting equilibria is minimum.
First, we study the possibility of inducing such optimal equilibria by imposing
a set of undifferentiated prices, i.e. a set of prices that treat both
human--driven and autonomous vehicles similarly at each link. We provide an
example which demonstrates that undifferentiated pricing is not sufficient for
achieving minimum social delay. Then, we study differentiated pricing where the
price of traversing each link may depend on whether vehicles are human--driven
or autonomous. Under differentiated pricing, we prove that link prices obtained
from the marginal cost taxation of links will induce equilibria with minimum
social delay if the degree of road capacity asymmetry (i.e. the ratio between
the road capacity when all  vehicles are human--driven and the road capacity when all vehicles are autonomous) is homogeneous among network links.

%and show how these differentiated link prices can be computed such that there exists an induced equilibrium with minimum social delay. Moreover, we prove that if such a set of link prices is imposed on a traffic network with a homogeneous degree of capacity asymmetry, the social delay is unique; hence, the social delay of all induced equilibria is minimum.
%is optimal at all induced equilibria.

\end{abstract}

%%%%%%%%%%%%%%%%%%%%%%%%%%%%%%%%%%%%%%%%%%%%%%%%%%%%%%%%%%%%%%%%%%%%%%%%%%%%%%%%
\section{Introduction}\label{sec:intro}
As autonomous vehicles become tangible technologies, studying their potential
impact on transportation networks is becoming increasingly important. With the
recent advances in deployment of autonomous vehicles, traffic networks will soon
experience a transient era when both human--driven and autonomous vehicles will
coexist on roads. Traffic networks with both vehicle types present on road, are
referred to as traffic networks with mixed vehicle autonomy. The mobility and
sustainability benefits of autonomous vehicles in networks with mixed autonomy
have been investigated from different perspectives. Multiple studies have
considered the problem of mitigating and damping shockwaves in traffic networks
with mixed vehicle
autonomy~\cite{wu2017emergent,cui2017stabilizing,stern2018dissipation} via
appropriate control of autonomous vehicles. In~\cite{sharon2017intersection}, intersection
management strategies for mixed-autonomy networks \pedit{are discussed}.
In~\cite{mehr2018game},  lane choices for autonomous vehicles \pcomment{removed
  ``in a mixed--autonomy''} \pedit{are determined} such that the overall performance of the system is optimized at traffic diverges. 

%There has been a number of research efforts on the elimination of signal lights in urban intersections in order to increase network throughput~\cite{miculescu2014polling,tallapragada2015coordinated,zhang2016optimal}. Regarding the low level dynamics of traffic networks, multiple studies have considered the problem of mitigating and damping shockwaves in traffic networks with mixed vehicle autonomy~\cite{wu2017emergent,cui2017stabilizing,stern2018dissipation}.

Connected and autonomous vehicles \pedit{can group into}  vehicle platoons
\pedit{that}
%A platoon of vehicles consists of a group of vehicles that
are capable of maintaining a shorter headway \pedit{while traveling}. %In~\cite{alam2015heavy}, vehicle platooning was shown to result in increasing network sustainability as it can reduce drag and energy consumption for heavy duty vehicles.
%Vehicle platooning can increase network sustainability as it was shown~\cite{alam2015heavy}
In~\cite{lioris2017platoons}, it was shown that as a consequence of shorter headways, vehicle platooning can lead to increases in road capacities, where up to three--fold increases in road capacity were achieved when all vehicles were assumed to be autonomous. The capacity increase that results from the presence of autonomous cars in mixed--autonomy networks was modelled in~\cite{lazar2017capacity}. Throughout this paper, we assume that the headway that autonomous vehicle maintain is shorter or equal to that of human--driven vehicles and focus on the consequences of this difference.

% This model was further used in~\cite{lazar2018price} to compute the price of anarchy of traffic networks with mixed vehicle autonomy. In~\cite{lazar2018maximizing}, the ordering of autonomous vehicles was determined such that the capacity of road was maximized. 

% It is anticipated that such capacity increases will increase the mobility of the networks with mixed autonomy, 

%Therefore, it is anticipated that deployment of autonomous vehicles can increase road capacities. In~\cite{lazar2018maximizing}, autonomous vehicles were commanded such that the road capacity was maximized in mixed--autonomy networks. To capture such increases in capacity, the capacity of roads was modeled for traffic networks with mixed autonomy in~\cite{lazar2017capacity}. 

Since vehicles select their routes selfishly, it is well known in the
transportation literature that traffic networks tend to operate at equilibria
characterized by Wardrop conditions, where vehicular flows are routed along the
network paths such that no vehicle can gain any savings in its travel time by
unilaterally changing its route~\cite{wardrop1952some}. However, it is known
that due to the selfish behavior of vehicles, such network equilibria are in
general inefficient. As an example, the well known Braess
paradox~\cite{braess1979existence} described an extreme scenario where adding a
link to a network increased the overall network delay at equilibrium.
Inefficiency of equilibria is commonly measured via the price of
anarchy~\cite{roughgarden2002bad}. In~\cite{lazar2017price}, the price of anarchy
of traffic networks with mixed vehicle autonomy was computed, and it was shown
that the price of anarchy of networks with mixed vehicle autonomy is larger than
that of networks with no autonomy. This implies that although the optimal
overall delay of networks with mixed autonomy is lower (due to the capacity
increase of autonomous vehicles), at equilibrium, the overall delay of networks
with mixed autonomy is further from its optimum. Furthermore, in our previous
work~\cite{mehr2018can, mehr2019will}, we showed that under constant vehicular
demand, if \pedit{the  fraction of  autonomous
vehicles increases},  the overall network delay may grow at equilibrium. 

In this paper, we study how to cope with the inefficiency of equilibria in traffic networks with
mixed vehicle autonomy such that the potential mobility benefits of autonomous
vehicles are   \pedit{achieved}, and minimum overall network delay \pedit{also
  known as} social delay is achieved at equilibrium.
In particular, we study how \pedit{to} set prices on network links such that equilibria with minimum social delay are induced. Pricing has been extensively studied as a tool to create
efficient equilibria in the previous literature
(See~\cite{cocchi1993pricing,cole2003pricing,cole2006much,huang2006efficiency}\pcomment{removed
outer bracket}). For traffic networks with only a single class
of vehicles, \pedit{a} marginal cost taxation of network links was proposed in~\cite{pigou2017economics} which was proven to induce equilibria with minimum social delay. However, when there
are multiple classes of vehicles, traffic networks exhibit complex behavior such as nonuniqueness of \pedit{equilibria}~\cite{dafermos1972traffic}; thereby, prices that are obtained
from marginal costs of network links may not be sufficient for inducing optimality of social delay at all possible equilibria~\cite{dafermos1973toll, smith1979marginal}.

\pedit{In this paper,} we first study whether minimum social delay can be
induced by setting undifferentiated prices on network links i.e. a set of prices
that treat both human--driven and autonomous vehicles similarly at each link. We
show through an example that undifferentiated prices are not in general enough
for inducing minimum social delay at equilibrium. Then, we consider the setting
where differentiated prices are \pedit{assigned} on network links, i.e. at every network
link, the price assigned to human--driven vehicles can be different from that of
autonomous vehicles. We prove that despite the fact that equilibrium is not
necessarily unique in traffic networks with mixed autonomy, for traffic networks
with a homogeneous degree of capacity asymmetry (\pedit{networks where the ratio
  of the link capacities when all vehicles are human--driven over that all
  vehicles are autonomous} is uniform throughout the network), with an
appropriate set of prices, the social network delay of all induced equilibria is
minimum. This is of paramount importance since uniqueness of social delay
guarantees that regardless of which equilibrium the network operates at, the
social delay will be minimum. In the absence of such a guarantee, the social
delay may be optimal only for certain  subsets of the induced equilibria.

\pedit{This paper is organized as follows.} In Section~\ref{sec:model}, we
describe how we model a traffic network with mixed autonomy. We review some
relevant results from previous works in Section~\ref{sec:prior}. In
Section~\ref{sec:impar}, we discuss the limitations of undifferentiated \pedit{pricing.}
In Section~\ref{sec:par}, we demonstrate how differentiated \pedit{pricing allows} for
inducing social optimality in certain traffic networks with mixed autonomy.
Then, in Section~\ref{sec:concl},
we conclude the paper and discuss future directions.

%Platooning

%~\cite{}

%%% Local Variables:
%%% mode: latex
%%% TeX-master: "sample"
%%% End:

%\input{introduction.tex}
%\section{INTRODUCTION}
%%%%%%%%%%%%%%%%%%%%%%%%%%%%%%%%%%%%%%%%%%%%%%%%%%%%%%%%%%%%%%%%%%%%%%%%%%%%%%%%
%\input{notation.tex}
%%%%%%%%%%%%%%%%%%%%%%%%%%%%%%%%%%%%%%%%%%%%%%%%%%%%%%%%%%%%%%%%%%%%%%%%%%%%%%%%
\section{Nonatomic Routing Games}\label{sec:model}
We model a traffic network as a directed graph $G = (\NN,\LL,\WW)$, where $\NN$ is the set of network nodes, $\LL$ is the set of network links, and $\WW$ is the set of network origin destination (O/D) pairs. For each O/D pair $w \in \WW$, $r_w^h$ denotes the fixed given demand of human--driven cars, and $r_w^a$ is the fixed given demand of autonomous cars that \pedit{need} to be routed along O/D pair $w$.
%is specified by its origin $o_w$, destination $d_w$, its demand $r_w$, and its autonomy ratio $\alpha_w$. Since each O/D pair $w \in W$ possibly has two classes of vehicles, its autonomy ratio $\alpha_w$ determines the ratio of autonomous vehicles to the total demand of vehicles that wish to move from $o_w$ to $d_w$, i.e. $(1-\alpha_w)r_w$ unit of human--driven and $\alpha_w r_w$ unit of autonomous vehicles must be routed from $o_w$ to $d_w$ using network links. %Let $L$ represent the total number of network links. 
Let $r=(r_w^h,r_w^a: w \in \WW)$ be the network vector of \pedit{demands}.
We assume that the network topology is such that for each O/D pair $w \in
\WW$, there exits at least one path  connecting its origin to its destination.
We use $\PP_w$ to denote the set of all such  paths connecting \pedit{the origin
to the destination in the} O/D pair $w \in
\WW$. Moreover,  we use $\PP = \cup_{w \in \WW} \PP_w$ to denote  the set of all network paths. 

To represent flows along paths, for  each O/D pair $w \in \WW$ and path $p \in
\PP_w$, let $f_p^h$ and $f_p^a$ be  the flows of human--driven and
autonomous vehicles along path $p$\pedit{, respectively}.  Note  that each path $p \in \PP$ connects
one and only one O/D pair.
\pedit{Thus, the O/D pair associated to a path $p$ is determined unambiguously; hence, we do not carry the O/D pair associated to a path in our notation.}
%Thus, the O/D pair  that a path $p$ belongs to is
%uniquely known for a given path $p$; consequently,  we do not include the O/D
%pair of paths in our notation.
\pedit{For a path $p$, $f_p = f_p^h + f_p^a$ denotes the total flow along that
  path. }
%We use $f_p = f_p^h + f_p^a$ for each path $p \in
%\PP$ to denote the total flow of vehicles along path $p$.
\pedit{Let $f =
(f_p^h, f_p^a : p \in \PP)$ denote the vector of human--driven  and autonomous
flows along all network paths.} A flow vector $f$ is feasible for a  given network $G$ if for every O/D pair $w \in \WW$, we have
%\begin{align}
% \sum_{p \in \PP_w} f_p^r &= (1-\alpha_w)_w r_w  \\
%\sum_{p \in \PP_w} f_p^a &= \alpha_w r_w \\
%\forall p \in \PP f^r_p,\; &\geq 0, f^r_p \geq 0
%\end{align}
\begin{subequations}
\label{eq:flow_cons}
\begin{align}
\sum_{p \in \PP_w} f_p^h &= r^h_w,  \\
\sum_{p \in \PP_w} f_p^a &=  r^a_w, \\
\forall p \in \PP:\; f^r_p &\geq 0, \; f^a_p \geq 0.
\end{align}
\end{subequations}

%does not violate flow conservation, i.e. for each O/D pair $w \in W$, we have $\sum_{p \in \PP_w} f_p^r = (1-\alpha)_w r_w $ , $\sum_{p \in \PP_w} f_p^a = \alpha_w r_w $, and for every path $p \in \PP$, we have $f_p^r \geq 0$, and $f_p^a \geq 0$. 

\pedit{For  link $l \in \LL$ and a flow vector $f$, we use $f_l = \sum_{p \in
    \PP: l \in p}f_p$ to denote the total flow of vehicles along link $l$.}
We further define $f_l^h = \sum_{p \in \PP: l \in p}f^h_p$ and $f_l^a = \sum_{p
  \in \PP: l \in p}f^a_p$ to   be the total flow of human--driven and autonomous
vehicles along link $l$\pedit{, respectively}.
\pedit{For a link $l \in \LL$, we denote the delay per unit of flow that is incurred when
  traveling that link by the link delay function $e_l: \mathbb{R}^2 \rightarrow \mathbb{R}$.}
%With each link $l \in 
%\LL$, a link delay function $e_l: \mathbb{R}^2 \rightarrow \mathbb{R}$ is
%associated, which is the delay  per unit of flow that is incurred when
%traversing link $l$.
For traffic networks  with mixed vehicle autonomy, using
the well known US Bureau of Public  Roads (BPR)~\cite{bureau1964traffic} delay
functions and the capacity  model of~\cite{lazar2017capacity}, the delay of
traversing each link $l\in \LL$ will be  of the following  form
(See~\cite{mehr2018can} for  details)
\begin{align}\label{eq:link_delay_fun}
e_l(f_l^h , f_l^a) = a_l+\gamma_l \left(\frac{f_l^h}{m_l}+\frac{f_l^a}{M_l}\right)^{\beta_l},
\end{align}

\noindent where $a_l$ and $\gamma_l$ are positive link parameters, $\beta_l$ is
a positive integer, $m_l$ is the capacity of link $l$ when all vehicles are
human--driven\pedit{, and} $M_l$ is the capacity of link $l$ when all vehicles are autonomous. Following~\cite{lazar2017price}, for each link $l \in \LL$, we define
\begin{align}\label{eq:deg_assym}
  \mu_l := \frac{m_l}{M_l},  
\end{align}
 to be the degree of capacity asymmetry along link $l$. Since autonomous
 vehicles are capable of maintaining a shorter headway, it is assumed that for
 every link $l\in \LL$, $M_l \geq m_l$; thus, $\mu_l \leq 1$. If the degree of
 capacity asymmetry is the same among all network links, i.e. for every link $l
 \in \LL$, $\mu_l = \mu$, then, the network is said to have a homogeneous degree
 of capacity asymmetry  $\mu$. \pcomment{removed ``equal to'' before $\mu$}

Since the delay functions are additive, we can define delay along a path $p \in \PP$ to be
\begin{align}\label{eq:path_delay}
e_p(f) := \sum_{l \in \mathcal{L}: l \in p} e_l(f_l^h, f_l^a).
\end{align}

 %$e_p(f)= \sum_{l \in L: l \in p} e_l(f_l^r, f_l^a)$. 
\noindent Note that for each path $p \in \PP$, the delay along path $p$ depends
on the \pedit{whole} flow vector $f$ rather than just $f_p^h$ and $f_p^a$ since
the network links \pcomment{removed ``resources''}are shared among vehicles along different paths. 
We define the network \pedit{\emph{overall delay} (also known as \emph{social
    delay})} to be 
\begin{align}\label{eq:social_delay}
J(f) = \sum_{p \in \PP}f_p e_p(f).
\end{align}

%$J(f) = \sum_{p \in \PP}f_p e_p(f)$. 

\noindent \pedit{We say that a} flow vector $f^* = ({f_p^h}^*, {f_p^a}^*: p \in \PP)$ is socially
optimal \pedit{when} it minimizes $J(f)$ subject to
relations~\eqref{eq:flow_cons}. The optimal social delay is denoted by $J^*$. \pcomment{suggest to remove this
  sentence, as it might be confusing, prices are not yet defined}
\pedit{In this paper, we assume that a network pricing infrastructure is in
  place in which vehicles are charged as they travel along specific links.
  Moreover, vehicles may be charged differently, depending on whether they are
  human driven or autonomous.} \pedit{When the network links are priced, for each link $l \in \LL$, we use
  $\tau_l^h \geq 0$ and $\tau_l^a \geq 0$ to denote  the price for human--driven
  and autonomous vehicles along link $l$, respectively. }
%Assuming that network links are priced, for each link $l \in \LL$, define
%$\tau_l^h \geq 0$ and $\tau_l^a \geq 0$ to respectively be the price for
%human--driven and autonomous vehicles along link $l$.
Let $\tau := (\tau_l^h, \tau_l^a : l \in \LL)$ be the vector of link prices. The
price of human--driven vehicles along a path $p \in \PP$ is defined as $\tau_p^h
:= \sum_{l\in \LL:l\in p} \tau_l^h$. Likewise, define $\tau_p^a := \sum_{l\in
  \LL:l\in p} \tau_l^a$ to be the price of autonomous vehicles along path $p$.

% \pedit{The demand along an O/D pair $w \in W$ is  routed along a subset of its possible paths}
% $\PP_w$
% %Each O/D pair $w \in W$ routes its demand along a subset of its possible paths
% %$\PP_w$
% such that its own \pedit{travel delay}  is minimized.
\pcomment{what does this mean: ``such that its own delay of travel is minimized''}
A common assumption in the transportation literature is that every O/D pair
consists of infinitesimally small agents that select their routes selfishly.
\pedit{Thus,} every agent selects a route
from its origin to its destination such that its own travel delay is minimized. As a result, traffic networks achieve an equilibrium, where no agent has \pedit{an} incentive for unilaterally changing its route. A traffic network is at equilibrium if well known Wardrop conditions hold~\cite{wardrop1952some}. %For traffic networks with mixed autonomy an equilibrium is defined via the following.
%\begin{definition}
%For a given network $G$, a feasible flow vector $f = (f_p^r, f_p^a : p \in \PP)$ is an equilibrium if and only if for every O/D pair $w \in W$ and every $p , p' \in \PP_w$, we have
%\begin{subequations}\label{eq:eq_def_orig}
%\label{eq:eq_def}
%\begin{gather}
%\begin{align}
%f^r_p \left(e_p(f)-e_{p'}(f)\right) &\leq 0, \\
%f^a_p \left(e_p(f)-e_{p'}(f) \right) &\leq 0. 
%\end{align}
%\end{gather}
%\end{subequations}
%\end{definition}
%It is widely known that due to the selfish nature of agents, network equilibria are in general inefficient. %The well known Braess's paradox~\cite{braess1968paradoxon} is an example where adding resources and links to the network increases the network social delay at equilibrium. Aligned with this, for mixed autonomy traffic networks notice that since autonomous vehicles lead to increasing road capacities, the optimal social delay of networks with mixed autonomy is lower 
%Moreover, for traffic networks with mixed autonomy, it was shown in~\cite{mehr2018can} that replacing a fraction of human--driven vehicles by autonomous vehicles might worsen the network social delay at equilibrium. 
%Consequently, to exploit the potential mobility benefits of autonomous vehicles, regulatory mechanisms such as price collection, must be sought. %Collecting prices across the network is one possible such strategy. %where a central authority sets prices on network links to steer the network equilibrium to the network social optimum. 

Note that when prices are set, \pedit{each agent is subjected to} both travel time and monetary costs.
%for each agent.
Hence, for traversing a path $p \in \PP$, an agent experiences a
delay $e_p(f)$ and  pays a price equal to either $\tau^h_p$ or $\tau^a_p$,
\pedit{depending on whether it is a human--driven or an autonomous vehicle.} Thus,
assuming that all agents value travel delays and monetary costs identically, the
\emph{cost} of an agent along a  path $p$ is either $e_p(f)+\tau_p^h$ or
$e_p(f)+\tau_p^a$ \pedit{depending} on whether the agent  is human--driven or autonomous. We define the link traversal \emph{cost} functions $c_l^{h}$ and $c_l^{a}$
to be the following
% \begin{align}
% c_l^{h}(f_l^h, f_l^a) &= e_l(f^h_l, f^a_l) + \tau^h_l, \\
% c_l^{a}(f_l^h, f_l^a) &= e_l(f^h_l, f^a_l) + \tau^a_l. 
% \end{align}
\begin{subequations}\label{eq:link_cost}
\begin{align}
c_l^{h}(f_l^h, f_l^a) &:= e_l(f^h_l, f^a_l) + \tau^h_l, \\
c_l^{a}(f_l^h, f_l^a) &:= e_l(f^h_l, f^a_l) + \tau^a_l. 
\end{align}
\end{subequations}

%We define the following latency functions to account for the prices that agents must pay.

%\begin{subequations}\label{eq:mod_delay_fun_link}
%\label{eq:eq_def}
%\begin{gather}
%\begin{align}
%e_l^r(f) = e_l^o(f) + \tau_l^r, \\
%e_l^a(f) = e_l^o(f) + \tau_l^a. 
%\end{align}
%\end{gather}
%\end{subequations}

%Note that since the price of human--driven vehicles might differ from that of the autonomous vehicles, we need to distinguish between the link latency of human--driven and autonomous vehicles. 
\noindent Similarly, we define the cost of traversing a path $p \in \PP$ for human--driven and autonomous vehicles to respectively be 
\begin{subequations}
\label{eq:path-cost}
\begin{align}
c_p^{h}(f) &:= e_p(f) + \tau_p^h, \\ 
c_p^{a}(f) &:= e_p(f) + \tau_p^a.
\end{align}
\end{subequations}

\noindent For a given vector of link prices $\tau$, we define a nonatomic
selfish routing to be \pedit{the triple} $(G,r,c)$.

%\begin{align}\label{eq:path-cost}
%e_p^{r,\tau}(f) &= e_p(f) + \tau_p^r \\ 
%e_p^{a,\tau}(f) &= e_p(f) + \tau_p^a.
%\end{align}

\begin{remark}
For every link $l \in \LL$, we use the term link \emph{cost} for human--driven
or autonomous vehicles to  refer to $c^{h}_l(f_l^r,f_l^a)$ or
$c^{a}_l(f_l^r,f_l^a)$, respectively. However, we use the term link \emph{delay} to
refer solely to $e(f_l^r,f_l^a)$, which is the delay of travel along link
$l$, \pedit{excluding the corresponding price}. Note that the cost of traversing
a link $l \in \LL$ might be different for human--driven and autonomous vehicles,
\pedit{while} the delay of traversing link $l$ is the same for both classes of vehicles. 
\end{remark}

\begin{remark}
When a price vector $\tau$ is set, although the traversal cost of a link
perceived by every agent may be different from the delay of travel along that
link, the overall performance of the system is still measured via the overall
delay incurred by all agents. The goal of this work is to find link prices such that the overall delay of the system perceived by the society is minimized.
\end{remark}

% \color{red}{mention the cost percieved by the user does not change the delay perceived the society}
% \color{black}

The overall \emph{cost} of a routing game $(G,r,c)$ is defined by
\begin{align}\label{eq:total-cost}
C(f) = \sum_{p \in \PP}  f_p^h c_p^{h} (f) + f_p^a c_p^{a} (f).
\end{align}

 For a priced network, Wardrop equilibria are defined via the following.

\begin{definition}
For a routing game $(G,r,c)$, a feasible flow vector $f = (f_p^h, f_p^a : p \in \PP)$ is an equilibrium if and only if for every O/D pair $w \in \WW$ and every pair of of paths $p , p' \in \PP_w$, we have
\begin{subequations}
\label{eq:eq_def}
\begin{align}
f^h_p \left(c_p^{h}(f) - c_{p'}^{h}(f)\right) &\leq 0, \label{eq:eq_def_h}\\
f^a_p \left(c_p^{a}(f) - c_{p'}^{a}(f) \right) &\leq 0.  \label{eq:eq_def_a}
\end{align}
\end{subequations}
\end{definition}

\begin{remark}
In general, despite the classical setting of a single vehicle class where \pedit{the}
Wardrop equilibrium is unique~\cite{smith1979existence}, in our mixed--autonomy
setting, there may exist multiple Wardrop equilibria
satisfying~\eqref{eq:eq_def}. \pcomment{is equilibrium unique or its cost?}
\end{remark}

\noindent Notice that equations~\eqref{eq:eq_def} imply that if for an O/D pair $w \in \WW$, and two paths $p,p' \in \PP_w$, the flows $f_p^h$ and $f_{p'}^h$ are nonzero,  we  have $c_p^{h}(f)=c_{p'}^{h}(f)$ (we can argue similarly for autonomous vehicles). Moreover, if at equilibrium, the flow along a path is zero, its travel cost cannot be smaller than that of the other paths with nonzero flow of the same vehicle class. Therefore, we can define the following.

%\begin{remark}
%Note that when a price vector $\T$ is set, at the induced equilibrium $f$, for every path $p \in \PP$, since the price of human--driven vehicles might differ from that of autonomous vehicles, the cost of traversing $p$ may be different for human--driven and autonomous vehicles, i.e. in general, $e_p^{r,\T}(f) \neq e_p^{a,\T}(f)$. Nonetheless, the delay of travel along path $p$ is still the same for both human--driven and autonomous vehicles since both types of vehicles experience the common delay $e_p(f)$ when traversing $p$.

%If the prices of both human--driven and autonomous vehicles are equal on each link $l \in L$, i.e. $\tau_l^a = \tau_l^r = \tau_l$, the cost of traversing a path is the same for both human--driven and autonomous vehicle. For every path $p \in \PP$ define $\tau_p = \sum_{l\in L:l\in p} \tau_l$ and $e^\tau_p(f)=e_p(f)+\tau_p$. Then, in this case, a flow vector $f$ is an equilibrium if and only if for every O/D pair $w \in W$ and every path $p \in \PP_w$, we have
%\begin{subequations}
%\label{eq:eq_def}
%\begin{gather}
%\begin{align}
%f^r_p \left(e_p^{\tau}(f) - e_{p'}^{\tau}(f)\right) &\leq 0, \\
%f^a_p \left(e_p^{\tau}(f) - e_{p'}^{\tau}(f) \right) &\leq 0. 
%\end{align}
%\end{gather}
%\end{subequations}
%\end{remark}

\begin{definition}
For a routing game $(G, r, c)$, if $f$ is an equilibrium flow vector, for each
O/D pair $w \in \WW$, define the cost of travel for human--driven and autonomous
vehicles to \pedit{respectively} be
\begin{subequations}
\label{eq:eq_cost_travel}
\begin{align}
c^h_w(f) &= \min_{p \in \PP_w} c_p^{h}(f), \\
c^a_w(f) &= \min_{p \in \PP_w} c_p^{a}(f).
\end{align}
\end{subequations}
\end{definition}

\noindent Since at equilibrium, for each O/D pair \pedit{$w$ and each class of
  vehicles, the cost of travel along the paths that have nonzero flow of that
  class is the same and equal to cost of travel for that class}, we have 
\begin{align}\label{eq:social-cost-travel-cost}
C(f) = \sum_{w \in \WW}   r^h_w c_w^h (f) + r_w^a c_w^a (f) .
\end{align}

%%% Local Variables:
%%% mode: latex
%%% TeX-master: "sample"
%%% End:

%%%%%%%%%%%%%%%%%%%%%%%%%%%%%%%%%%%%%%%%%%%%%%%%%%%%%%%%%%%%%%%%%%%%%%%%%%%%%%%%

\section{Prior Work}\label{sec:prior}
In this section, we review some results from the previous literature that we will further use in this paper. The following proposition is a generalization of the results in~\cite{dafermos1973toll}.

%From~\cite{mehr2018can}, we know that there always exists at least one equilibrium for a given $G$ or $G^\tau$. It was also shown in~\cite{mehr2018can} that the equilibrium is not necessarily unique, and there might exist multiple equilibria for traffic networks with mixed autonomy. When there are multiple equilibria, the social delay my may not be unique. However, for networks with homogeneous degrees of capacity asymmetry, when no tolls are imposed, the following was proved in~\cite{mehr2018can}.

\begin{proposition}\label{prop:marginal_price} 
For a routing game $(G,r,c)$, let $f^*$ be an optimizer of the network social delay $J$, and $J^*$ be the minimum social delay of the network. If for each link $l\in \LL$, link prices $\tau$ is set to be
\begin{subequations}
\label{eq:link-tolls}
\begin{align}
\tau_l^h &= ({f_l^h}^* + {f_l^a}^*) \left( \frac{\partial}{\partial f_l^h} e_l(f_l^h,f_l^a) \right)\bigg|_{f^*}, \\
\tau_l^a &= ({f_l^h}^* + {f_l^a}^*) \left( \frac{\partial}{\partial f_l^a} e_l(f_l^h,f_l^a) \right)\bigg|_{f^*}, \end{align}
\end{subequations}

\noindent then, there exists at least one equilibrium flow vector $f$ for the routing game $(G,r,c)$ such that the network social delay is optimal at this equilibrium, i.e. $J(f)=J^*$. 

\end{proposition}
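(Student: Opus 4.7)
The plan is to show that the socially optimal flow $f^*$ itself is one of the equilibria of the priced routing game; both the existence claim and $J(f) = J^*$ then follow immediately. The key observation is that with marginal cost tolls~\eqref{eq:link-tolls}, the user-perceived path cost evaluated at $f^*$ coincides with the marginal contribution of that path to the social delay $J$.

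First, I would rewrite the social delay in link coordinates as $J(f) = \sum_{l \in \LL} (f_l^h + f_l^a)\, e_l(f_l^h, f_l^a)$, which follows from the additivity of $e_p$ over links. Using the chain rule with $f_l^h = \sum_{p \in \PP : l \in p} f_p^h$, this yields
\begin{equation*}
\frac{\partial J}{\partial f_p^h}\bigg|_{f^*} = \sum_{l \in \LL : l \in p} \left[ e_l(f_l^{h*}, f_l^{a*}) + (f_l^{h*} + f_l^{a*}) \frac{\partial e_l}{\partial f_l^h}\bigg|_{f^*} \right] = c_p^h(f^*),
\end{equation*}
with the analogous identity $\partial J / \partial f_p^a|_{f^*} = c_p^a(f^*)$ for autonomous flow. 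Thus the marginal social cost of a path at $f^*$ matches exactly the private path cost perceived by an agent under the prescribed tolls.

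Next, I would write the KKT conditions for the problem $\min J(f)$ subject to~\eqref{eq:flow_cons}. Since the constraints are linear the relevant constraint qualification is automatic, so KKT is necessary at $f^*$. Introducing multipliers $\lambda_w^h$ and $\lambda_w^a$ for the demand equalities together with non-negativity multipliers for the path flows, complementary slackness yields, for every $w \in \WW$ and every $p \in \PP_w$,
\begin{equation*}
\frac{\partial J}{\partial f_p^h}\bigg|_{f^*} \geq \lambda_w^h, \qquad \text{with equality whenever } f_p^{h*} > 0,
\end{equation*}
and analogously for autonomous flows. Substituting the identity from the previous paragraph, one obtains that for every $p, p' \in \PP_w$ with $f_p^{h*} > 0$, $c_p^h(f^*) = \lambda_w^h \leq c_{p'}^h(f^*)$, which is exactly~\eqref{eq:eq_def_h}; the autonomous class yields~\eqref{eq:eq_def_a}. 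Hence $f^*$ is an equilibrium of the priced game $(G, r, c)$, and $J(f^*) = J^*$ by construction.

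The step I expect to require the most care is the identity in the first display: it requires carefully tracking the chain-rule expansion of $\partial J / \partial f_p^h$ and recognizing that the marginal cost toll is designed precisely to cancel the externality term $(f_l^h + f_l^a)\,\partial e_l / \partial f_l^h$ that separates private cost from marginal social cost. The remaining ingredients, namely KKT necessity and the rewriting of the stationarity conditions as Wardrop inequalities, are standard once this identity is in hand.
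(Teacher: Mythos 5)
Your proposal is correct and follows essentially the same route as the paper's Appendix proof: both compute $\partial J/\partial f_p^h\big|_{f^*} = e_p(f^*)+\tau_p^h = c_p^h(f^*)$, invoke KKT necessity (justified by affine constraints), and use complementary slackness to turn stationarity into the Wardrop inequalities, so that $f^*$ itself is an equilibrium of the priced game. The only differences are cosmetic (your multipliers are attached per O/D pair with the inequality form, versus the paper's explicit $\lambda_p$, $\nu_w$ bookkeeping).
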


\begin{proof}
It is easy to verify that~\eqref{eq:link-tolls} renders $f^*$ an equilibrium
flow vector by verifying \pedit{the} KKT conditions at the optimal point $f^*$. For completeness, we have included the proof of Proposition~\ref{prop:marginal_price} in Appendix~\ref{sec:appendix}.
\end{proof}

Note that Proposition~\ref{prop:marginal_price} indicates that if prices are obtained from~\eqref{eq:link-tolls}, the social delay of one of the induced equilibria is optimal. However, the social delay of other equilibria may not necessarily be optimal.

As mentioned previously, in general, the equilibrium is not unique in our mixed--autonomy setting. However, we use the following result from~\cite{altman2001equilibria} in the remainder of the paper to establish some properties of equilibria in the mixed--autonomy setting. 

\begin{proposition}\label{prop:unique}
For a routing game $(G,r,c)$, if along each link $l\in \LL$, the link traversal cost functions $c_l^h$ and $c_l^a$ are strictly increasing functions of the total flow along that link $f_l=f_l^h+f_l^a$, and the link cost functions $c_l^h$ and $c_l^a$ are identical up to additive constants, then, at equilibrium, the total flow along each link $l \in \LL$ is unique. 
\end{proposition}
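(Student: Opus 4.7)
The plan is to characterize equilibria of $(G,r,c)$ as minimizers of a Beckmann-style convex potential and then exploit strict convexity of that potential in the link total flows. First I would unpack the hypothesis: since each $c_l^h$ and $c_l^a$ depends on $(f_l^h, f_l^a)$ only through the sum $f_l = f_l^h + f_l^a$ and the two functions differ by an additive constant, there exist a strictly increasing function $c_l:\mathbb{R}_{\geq 0} \to \mathbb{R}$ and real constants $\delta_l^h, \delta_l^a$ such that
\[
c_l^h(f_l^h, f_l^a) = c_l(f_l) + \delta_l^h, \qquad c_l^a(f_l^h, f_l^a) = c_l(f_l) + \delta_l^a.
\]
In the pricing application of Section~\ref{sec:model}, these constants are just the tolls $\tau_l^h$ and $\tau_l^a$, with $c_l = e_l$.

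Next I would introduce the Beckmann-type potential
\[
\Phi(f) \;=\; \sum_{l \in \LL} \left[\, \int_0^{f_l} c_l(s)\, ds \;+\; \delta_l^h f_l^h \;+\; \delta_l^a f_l^a \,\right],
\]
and consider the convex program of minimizing $\Phi$ over the feasible polytope defined by~\eqref{eq:flow_cons}. A direct differentiation gives $\partial \Phi / \partial f_p^h = c_p^h(f)$ and $\partial \Phi / \partial f_p^a = c_p^a(f)$ for every path $p \in \PP$. Attaching multipliers $\lambda_w^h, \lambda_w^a$ to the demand equalities, the KKT conditions read $c_p^h(f) \geq \lambda_w^h$ with $f_p^h (c_p^h(f) - \lambda_w^h) = 0$ for every $p \in \PP_w$, and analogously for autonomous flows. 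Identifying $\lambda_w^h = \min_{p \in \PP_w} c_p^h(f)$, these are exactly the Wardrop conditions~\eqref{eq:eq_def}, so the set of equilibria of $(G,r,c)$ coincides with the set of minimizers of $\Phi$ over the feasible set.

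To conclude, I would use strict convexity: because $c_l$ is strictly increasing, $\int_0^{f_l} c_l(s)\, ds$ is strictly convex in the link total $f_l$, while the remaining terms $\delta_l^h f_l^h + \delta_l^a f_l^a$ are linear in the class-specific flows. Hence $\Phi$ is convex and its strict convexity is activated in the link totals $(f_l)_{l \in \LL}$. If $f$ and $g$ are two equilibria, then $(f+g)/2$ is feasible and must also minimize $\Phi$; equality in Jensen's inequality combined with strict convexity in each $f_l$ forces $f_l = g_l$ for every link $l \in \LL$, which is the stated uniqueness.

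I expect the only delicate step to be verifying that the KKT conditions of the convex program reproduce the Wardrop conditions~\eqref{eq:eq_def} across both vehicle classes, and confirming that the asymmetric-looking $\Phi$ is a genuine potential whose gradients simultaneously recover $c_p^h$ and $c_p^a$; this is really a Jacobian-symmetry check enabled precisely by the common-core-plus-constant structure of $c_l^h$ and $c_l^a$. Once that step is in place, strict convexity of the integral term in $f_l$ delivers uniqueness of the link totals immediately, even though uniqueness of class-specific link flows or of path flows need not hold.
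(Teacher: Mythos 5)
Your proof is correct, but note that the paper does not actually prove Proposition~\ref{prop:unique} at all: it is imported as a black box from Altman and Kameda~\cite{altman2001equilibria}. So there is no ``paper proof'' to match; what you have written is a self-contained justification via the classical Beckmann--McGuire--Winsten potential, which is indeed the standard (and essentially the cited reference's) route for cost structures of the form ``common strictly increasing function of the link total plus a class-specific constant.'' Your three steps all check out: the common-core-plus-constant structure makes $\Phi$ a genuine potential with $\partial\Phi/\partial f_p^h = c_p^h(f)$ and $\partial\Phi/\partial f_p^a = c_p^a(f)$; convexity of $\Phi$ plus affine constraints makes the KKT system both necessary and sufficient, and that system is exactly the Wardrop conditions~\eqref{eq:eq_def} with $\lambda_w^h=\min_{p\in\PP_w}c_p^h(f)$; and the midpoint/Jensen argument with strict convexity of $\int_0^{f_l}c_l(s)\,ds$ forces equality of the link totals of any two equilibria, while correctly leaving the class-specific and path decompositions undetermined. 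The only caveat worth recording is a regularity one: the gradient identity requires each $c_l$ to be continuous (a strictly increasing function can jump, in which case one must pass to subgradients of the still-convex integral term). This costs nothing in the paper's application, since Proposition~\ref{prop:unique} is invoked only for the auxiliary game whose costs $\tilde{c}_l^h,\tilde{c}_l^a$ are polynomials in $\tilde{f}_l^h+\tilde{f}_l^a$ shifted by the constants $\tau_l^h,\tau_l^a$, so your argument applies verbatim there.
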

It is important to mention that in our mixed--autonomy setting, since the link
cost functions~\eqref{eq:link_cost} depend on the flow of each vehicle class,
not the total flow along the link, Proposition~\ref{prop:unique} does not
\pedit{directly} apply to our setting.
\pedit{Nevertheless, we will further apply Proposition~\ref{prop:unique} to an
  auxiliary routing game to obtain some of our results. }
\section{Undifferentiated Prices}\label{sec:impar}

When it comes to setting prices for network links, generally, an ideal set of
prices is the one that induces an equilibrium flow vector that minimizes the
network social delay. When there are multiple classes of vehicles in a network,
for instance human--driven and autonomous vehicles in our \pedit{scenario}, it
is important to \pedit{determine} whether \pedit{it is possible to induce} an equilibrium that optimizes social
delay \pcomment{removed ``is achievable''} via prices that do not differentiate vehicle classes. This
is of  practical significance because undifferentiated prices are much easier to
implement.  %Moreover, impartial prices will not lead to discrimination between vehicles. Therefore, they are more easily adopted by vehicles.
\pedit{Unfortunately,} in this section, we show through \pedit{a counterexample} that, for
traffic networks with mixed autonomy, it is not always possible to induce
equilibrium flows with minimum social delay by simply applying  undifferentiated prices, even in networks with a homogeneous degree of capacity asymmetry. 

\begin{figure}
\centering
\begin{tikzpicture}[scale=1.8]
\begin{scope}[every node/.style={circle,thick,draw}]
    \node (A) at (0,0) {A};
    \node (B) at (-2,1) {B};
    \node (C) at (-2,-1) {C};
    %\node (D) at (6,0) {D};
    
\end{scope}

\begin{scope}[>={Stealth[teal]},
              every node/.style={fill=white,circle},
              every edge/.style={draw=teal,very thick}]
    \path [->] (A) edge node {$1$} (B);
    \path [->] (A) edge node {$2$} (C);
    \path [->] (B) edge node {$3$} (C);
    %\path [->] (C) edge node {$4$} (B);
    %\path [->] (B) edge node {$5$} (C);
   %\path [->] (A) edge node {$4$} (D);
    %\path [->] (D) edge node {$3$} (C);
    %\path [->] (A) edge node {$3$} (E);
    %\path [->] (D) edge node {$3$} (E);
    %\path [->] (D) edge node {$3$} (F);
    %\path [->] (C) edge node {$5$} (F);
    %\path [->] (E) edge node {$8$} (F); 
    \path [->] (C) edge[bend left=45] node {$4$} (B); 
\end{scope}

\end{tikzpicture}
\caption{A network with two O/D pairs from A to B and A to C.}
\label{fig:couter_examp_net}
\end{figure}
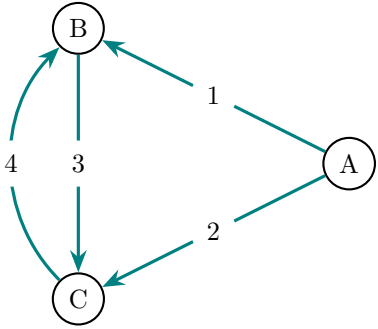
\begin{example}\label{examp:counter_examp}
Consider the network shown in Figure~\ref{fig:couter_examp_net}. Assume that the
network has a homogeneous degree of capacity asymmetry $\mu = \frac{1}{3}$.
There are two O/D pairs $\WW = \{ AB,AC\}$. For each O/D pair, there are two
possible paths. The demand of O/D pairs are $r^h_{AB}= 7.5$, $r^a_{AB}= 4.5$,
$r^h_{AC}=1.2$, and $r^a_{AC}= 4.8$. For every link $l, 1 \leq l \leq 4$, assume
that $\gamma_l = 1$, and $\beta_l = 1$. The other link parameters are $a_1 = 9,
a_2 = 3, a_3 = 0.6$, and $a_4 = 0.6$ while $m_1 = 3, m_2 = 0.5, m_3 = 0.7$, and
$m_4 = 0.5$. \pedit{Note that for each link $l$, the parameter $M_l$ is determined through the
  relation $M_l = m_l / \mu$.} For this network, \pedit{it is easy to compute}
the optimal social delay\pedit{, which }is $J^*= 193.54$.
\pedit{In order to see whether a set of undifferentiated prices can achieve this
optimal social delay,}
we  \pedit{solve} the following optimization problem for this network
\begin{equation}\label{eq:examp_opt}
\begin{aligned}
& \underset{}{\min}
& & \sum_{p \in \PP} (f_p^h + f_p^a) e_p(f) \\
& \text{subject to}
& & \text{Equations}~\eqref{eq:flow_cons}, \\
&&&  \text{Equations}~\eqref{eq:eq_def},\\
&&& \forall l \in \LL: \T_l^h = \T_l^a \geq 0,
\end{aligned}
\end{equation}
%This optimization problem determines in fact the minimum possible social delay when only partial prices are allowed. 
\pcomment{why not bringing the constraint $\tau>0$ downstairs, e.g. $\forall l
  \in \LL: \T_l^h = \T_l^a \geq 0$}
Note that if \pedit{the} minimum social \pedit{delay $J^*$ can be achieved}  via undifferentiated prices, the optimal value of optimization problem~\eqref{eq:examp_opt} must be equal to $J^*$. However, the minimum value of~\eqref{eq:examp_opt} for the network in Figure~\ref{fig:couter_examp_net} is $195.597$ which is clearly greater than $J^*$. This indicates that an equilibrium with socially optimal delay cannot necessarily be induced by undifferentiated link prices in general.
\end{example}
\section{Differentiated prices}\label{sec:par}

Having shown in Example~\ref{examp:counter_examp}
that in general, undifferentiated pricing \pedit{cannot induce} an
equilibrium with socially optimal delay, a natural question \pcomment{removed
  ``that arises''} is whether \emph{differentiated} prices can be employed to
induce an equilibrium with minimum social delay. In other words, if at every
link $l \in \LL$, we allow $\tau_l^h$ and $\tau_l^a$ to be different, does there
exist a price vector $\tau = (\tau_l^h, \tau_l^a : l \in \LL)$ that induces
equilibria with minimum social delay? \pcomment{suggest not to emph delay}
%Moreover, if there exists such a price vector, how can a central authority find
%the price values? \pcomment{suggest to remove this sentence and replace with the
%following:} 
In this section, \pedit{we prove that such differentiated prices
exist, and we find them.}

\subsection{Homogeneous Networks}
The following theorem establishes the existence of optimal prices for traffic networks with a homogeneous degree of capacity asymmetry and also provides a recipe for how to find the optimal price values.

\begin{theorem}\label{theo}
Consider a routing game $ (G,r,c)$ with a homogeneous degree of capacity
asymmetry $\mu$.
\pedit{Let $f^*$ and $J^*$ be an optimizer and  the minimum value of the social
  delay, respectively. 
}
%Let $f^*$ be an optimizer of the social delay and $J^*$ be the minimum social
%delay.
Then, for each link $l \in \LL$, if \pedit{the} link prices are set to be
%\begin{align}\label{eq:link_prices_het_reg_simp}
%\tau_l^r &= ({f_l^r}^* + {f_l^a}^*) \left( \frac{\partial}{\partial f_l^r} e_l(f_l^r,f_l^a) \right)\bigg|_{f^*}, \\
%\tau_l^a &= ({f_l^r}^* + {f_l^a}^*) \left( \frac{\partial}{\partial f_l^a} e_l(f_l^r,f_l^a) \right)\bigg|_{f^*}. \label{eq:link_prices_het_aut_simp}
%\end{align}
\begin{subequations}
\label{eq:link-prices-theor}
\begin{align}
\tau_l^h &= ({f_l^h}^* + {f_l^a}^*) \left( \frac{\partial}{\partial f_l^h} e_l(f_l^h,f_l^a) \right)\bigg|_{f^*}, \\
\tau_l^a &= ({f_l^h}^* + {f_l^a}^*) \left( \frac{\partial}{\partial f_l^a} e_l(f_l^h,f_l^a) \right)\bigg|_{f^*}, \end{align}
\end{subequations}
then,
\pedit{all induced equilibria of the game $(G,r,c)$ have the same social delay,
  which is equal to $J^*$.}
%the social delay of all induced equilibria of the game $(G,r,c)$, is unique and equal to $J^*$.
\end{theorem}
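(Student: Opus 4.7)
The plan is to reduce the mixed--autonomy routing game to a setting where Proposition~\ref{prop:unique} becomes applicable, via an auxiliary game on rescaled autonomous flows. Two structural consequences of the hypothesis $\mu_l = \mu$ for every link drive the proof. First, using $M_l = m_l/\mu$ in the delay function~\eqref{eq:link_delay_fun} shows that $e_l$ depends only on the ``effective flow''
\begin{align*}
\tilde{f}_l := f_l^h + \mu f_l^a, \qquad \text{namely} \qquad e_l(f_l^h, f_l^a) = a_l + \gamma_l \left(\tilde{f}_l / m_l\right)^{\beta_l} =: \tilde{e}_l(\tilde{f}_l).
\end{align*}
Second, a direct differentiation gives $\partial e_l / \partial f_l^a = \mu\, \partial e_l / \partial f_l^h$, so the marginal--cost prices~\eqref{eq:link-prices-theor} satisfy $\tau_l^a = \mu \tau_l^h$.

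Next, I would introduce an auxiliary routing game $(G, \tilde{r}, \tilde{c})$ with the same graph and paths, demands $\tilde{r}_w^h := r_w^h$ and $\tilde{r}_w^a := \mu r_w^a$, and link costs $\tilde{c}_l^h(g_l^h, g_l^a) := \tilde{e}_l(g_l^h + g_l^a) + \tau_l^h$ and $\tilde{c}_l^a(g_l^h, g_l^a) := \tilde{e}_l(g_l^h + g_l^a) + \mu \tau_l^h$. The change of variables $(f_p^h, f_p^a) \mapsto (g_p^h, g_p^a) := (f_p^h, \mu f_p^a)$ is a bijection between feasible flow vectors of the two games, path costs are preserved ($c_p^h(f) = \tilde{c}_p^h(g)$ and $c_p^a(f) = \tilde{c}_p^a(g)$), and since $\mu > 0$ the equilibrium conditions~\eqref{eq:eq_def} for $f$ and $g$ are equivalent. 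In the auxiliary game $\tilde{c}_l^h$ and $\tilde{c}_l^a$ are both strictly increasing functions of the total link flow $g_l^h + g_l^a = \tilde{f}_l$ and differ only by the additive constant $(1-\mu)\tau_l^h$; Proposition~\ref{prop:unique} then applies and forces $\tilde{f}_l$ to take the same value at every equilibrium of the original game. In particular, $\tilde{f}_l = \tilde{f}_l^* := f_l^{h*} + \mu f_l^{a*}$ at every equilibrium $f$.

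The final step converts uniqueness of $\tilde{f}_l$ into uniqueness of $J$. Since $e_l(f) = \tilde{e}_l(\tilde{f}_l^*)$ is the same at every equilibrium, so is every path cost $c_p^h(f), c_p^a(f)$, and therefore so is each equilibrium travel cost $c_w^h(f), c_w^a(f)$; by~\eqref{eq:social-cost-travel-cost} the aggregate cost $C(f)$ is invariant across equilibria. Using $\tau_l^a = \mu \tau_l^h$, the identity
\begin{align*}
C(f) \;=\; J(f) + \sum_{l \in \LL} \left( f_l^h \tau_l^h + f_l^a \tau_l^a \right) \;=\; J(f) + \sum_{l \in \LL} \tau_l^h \tilde{f}_l^*
\end{align*}
shows that $J(f)$ is constant over equilibria. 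Since Proposition~\ref{prop:marginal_price} guarantees that $f^*$ itself is an equilibrium with $J(f^*) = J^*$, every equilibrium achieves $J^*$.

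The main obstacle is precisely the one flagged right after Proposition~\ref{prop:unique}: the original link cost $c_l^h(f_l^h, f_l^a) = e_l(f_l^h, f_l^a) + \tau_l^h$ depends on the pair $(f_l^h, f_l^a)$ rather than on the scalar sum $f_l^h + f_l^a$, so Proposition~\ref{prop:unique} cannot be applied directly. Identifying the rescaling $g^a = \mu f^a$ that collapses this dependence onto a single scalar quantity is the crux of the argument, and the homogeneity of $\mu$ across links is exactly what is needed to make this rescaling \emph{globally} consistent; a heterogeneous $\mu_l$ would force link--dependent scalings that cannot be realized by a single change of variables on path flows, which suggests why the same strategy is not expected to extend beyond the homogeneous case.
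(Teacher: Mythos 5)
Your proof is correct and follows essentially the same route as the paper: the same auxiliary game with rescaled autonomous demand $\tilde{r}^a_w = \mu r^a_w$ and flows $\mu f_p^a$, the same appeal to Proposition~\ref{prop:unique} to get uniqueness of $f_l^h + \mu f_l^a$ across equilibria, and the same decomposition $C(f) = J(f) + \sum_{l}\tau_l^h(f_l^h + \mu f_l^a)$ using $\tau_l^a = \mu\tau_l^h$ to transfer uniqueness from $C$ to $J$. Your closing remark on why a heterogeneous $\mu_l$ breaks the global change of variables is a nice addition but does not change the argument.
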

\begin{proof}

For \pedit{the} routing game $(G,r,c)$ where \pedit{the} link prices are
obtained via~\eqref{eq:link-prices-theor},
\pedit{Proposition~\ref{prop:marginal_price} implies that there exists one
  equilibrium with minimum social delay $J^*$.}
%note that using Proposition~\ref{prop:marginal_price}, it is easy to see
%that~\eqref{eq:link-prices-theor} induces an equilibrium with minimum social
%delay.
We prove that when \pedit{the above} prices are set, all induced equilibria of
$(G,r,c)$ have the same social delay.  \pedit{This would then imply that all induced equilibria of
  the game $(G, r, c)$} have the unique social delay $J^*$, as was claimed. 

% Therefore, the social delay of all induced equilibria of $\cc{I}^\tau$ must be the same and equal to $J^*$ when $\tau$ is obtained by~\eqref{eq:link-prices-simp}. 
%\begin{align}\label{eq:link_prices_het_reg}
%\tau_l^r &= \left( \frac{\partial}{\partial f_l^r}\Big((f_l^r + f_l^a) e_l(f_l^r,f_l^a) \Big)- e_l(f_l^r, f_l^a)\right)\bigg|_{f^*}, \\
%\tau_l^a &= \left( \frac{\partial}{\partial f_l^ra\Big((f_l^r + f_l^a) e_l(f_l^r,f_l^a) \Big)- e_l(f_l^r, f_l^a)\right) \bigg|_{f^*},\label{eq:link_prices_het_aut}
%\end{align}

% \begin{subequations}
% \label{eq:link-prices}
% \begin{gather}
% \begin{align}
% \tau_l^r &= \left( \frac{\partial}{\partial f_l^r}\Big((f_l^r + f_l^a) e_l(f_l^r,f_l^a) \Big)- e_l(f_l^r, f_l^a)\right)\bigg|_{f^*}, \\
% \tau_l^a &= \left( \frac{\partial}{\partial f_l^r}\Big((f_l^r + f_l^a) e_l(f_l^r,f_l^a) \Big)- e_l(f_l^r, f_l^a)\right) \bigg|_{f^*},\end{align}
% \end{gather}
% \end{subequations}

\pedit{Therefore, it remains to prove uniqueness of social delay at equilibria of $(G,r,c)$ when link prices
are obtained from~\eqref{eq:link-prices-theor}. In order to do so,} we construct an auxiliary game
instance $(G,\tilde{r},\tilde{c})$, \pedit{with the same network graph $G$ and
  O/D pairs $\WW$,} where the demand of O/D pairs, link traversal delays and
cost functions are defined as follows. For each O/D pair $w \in \WW$,
define the demand of human--driven and autonomous cars $\tilde{r}^h_w$ and
$\tilde{r}^a_w$ in the auxiliary game to  be
\begin{subequations}\label{eq:r_tilde}
\begin{align}
   \tilde{r}^h_w &:= r^h_w, \\
   \tilde{r}^a_w &:= \mu r^a_w.
\end{align}
\end{subequations}

\noindent Moreover, for every link $l \in \LL$, let the link delay functions of
the auxiliary game $(G,\tilde{r},\tilde{c})$ be \pedit{defined as}
\begin{align}\label{eq:e_tilde}
    \tilde{e}_l (\tilde{f}_l^h,\tilde{f}_l^a) := a_l+\gamma_l \left(\frac{\tilde{f}_l^h+\tilde{f}_l^a}{m_l}\right)^{\beta_l}.
\end{align}

\noindent Additionally, for every link $l\in \LL$, \pedit{with the prices as in
  \eqref{eq:link-prices-theor},} define the link cost functions in the auxiliary game to be
\begin{subequations}\label{eq:c_tilde}
\begin{align}
   \tilde{c}^{h}_l(\tilde{f}_l^h,\tilde{f}_l^a) &:= \tilde{e}_l (\tilde{f}_l^h,\tilde{f}_l^a) + \T_l^h, \\
   \tilde{c}^{a}_l(\tilde{f}_l^h,\tilde{f}_l^a) &:= \tilde{e}_l (\tilde{f}_l^h,\tilde{f}_l^a) + \T_l^a.
\end{align}
\end{subequations}

Now, let $f=(f_p^h,f_p^a: p \in \PP)$ be an equilibrium of the original game
$(G,r,c)$. For every path $p \in \PP$, define $\tilde{f}_p^h := {f}_p^h$ and
$\tilde{f}_p^a := \mu {f}_p^a$. We claim that $\tilde{f} = (\tilde{f}_p^h,
\tilde{f}_p^a : p \in \PP)$ is an equilibrium flow vector for the auxiliary game
$(G,\tilde{r},\tilde{c})$.  \pcomment{since auxiliary flows are defined through
  paths, they are always admissible. so removed ``Note that since the network is
  assumed to have a homogeneous degree of capacity asymmetry, $\tilde{f}$ is an
  admissible flow vector that satisfies flow conversation at every node.''} It
can be easily verified that for every origin--destination pair $w \in \cc{W}$,
we have $\sum_{p \in \PP_w}\tilde{f}_p^h = \tilde{r}_w^h$ and  $\sum_{p \in
  \PP_w}\tilde{f}_p^a = \tilde{r}_w^a$. Thus, $\tilde{f}$ is a feasible flow
vector for the auxiliary game.  \pedit{Moreover, it is easy to see that for
  every link $l \in \LL$, we have $\tilde{f}_l^a = \mu f_l^a$; therefore, using}
the definition of $\tilde{f}$ and Equations~\eqref{eq:link_delay_fun}
and~\eqref{eq:e_tilde}, for every link $l\in \LL$, we establish the following
\begin{equation}
  \label{eq:equal-delay}
\begin{aligned}
\tilde{e}_l (\tilde{f}_l^h,\tilde{f}_l^a) &= a_l+\gamma_l \left(\frac{\tilde{f}_l^h+\tilde{f}_l^a}{m_l}\right)^{\beta_l}\\
&= a_l+\gamma_l\left(\frac{f_l^h+\mu f_l^a}{m_l}\right)^{\beta_l}\\
&= a_l+\gamma_l \left(\frac{f_l^h}{m_l}+\frac{f_l^a}{M_l}\right)^{\beta_l}\\
&= e_l(f_l^h,f_l^a).
\end{aligned}
\end{equation}
\pcomment{replaced only one equation number for the whole chain.}

\noindent Thus, from~\eqref{eq:link_cost}, \eqref{eq:c_tilde},
 and~\eqref{eq:equal-delay}, for every link $l\in \LL$, we have
\begin{subequations}
  \label{eq:cost_equi}
\begin{align}
   \tilde{c}^{h}_l(\tilde{f}_l^h,\tilde{f}_l^a) &= c^{h}_l({f}_l^h,{f}_l^a) \\
   \tilde{c}^{a}_l(\tilde{f}_l^h,\tilde{f}_l^a) &= c^{a}_l({f}_l^h,{f}_l^a).
\end{align}
\end{subequations}

Now since $f$ is an equilibrium for $(G,r,c)$, using~\eqref{eq:eq_def} and~\eqref{eq:cost_equi}, for every O/D pair $w \in \WW$ and pair of paths $p, p' \in \PP_w$, we have
% \begin{subequations}
% \label{eq:eq_def_repeat}
% \begin{align}
% f^r_p \left(e_p^{r,\tau}(f) - e_{p'}^{r,\tau}(f)\right) &\leq 0, \\
% f^a_p \left(e_p^{a,\tau}(f) - e_{p'}^{a,\tau}(f) \right) &\leq 0. \label{eq:second_eq}
% \end{align}
% \end{subequations}
% \noindent Now, using~\eqref{eq:cost_equi}, we can conclude that
\begin{subequations}
\label{eq:eq_def_repeat_rep}
\begin{align}
f^h_p \left(\tilde{c}_p^{h}(\tilde{f}) - \tilde{c}_{p'}^{h}(\tilde{f})\right) &\leq 0, \\
f^a_p \left(\tilde{c}_p^{a}(\tilde{f}) - \tilde{c}_{p'}^{a}(\tilde{f}) \right) &\leq 0. \label{eq:second_eq_1}
\end{align}
\end{subequations}

\noindent Multiplying~\eqref{eq:second_eq_1} by the positive constant $\mu$, we have
\begin{subequations}\label{eq:tilde_f_eq}
\begin{align}
f^h_p \left(\tilde{c}_p^{h}(\tilde{f}) - \tilde{c}_{p'}^{h}(\tilde{f})\right) &\leq 0, \\
\mu f^a_p \left(\tilde{c}_p^{a}(\tilde{f}) - \tilde{c}_{p'}^{a}(\tilde{f}) \right) &\leq 0.
\end{align}
\end{subequations}

\noindent Using the definition of $\tilde{f}$, from~\eqref{eq:tilde_f_eq} we can conclude the following
\begin{subequations}
\label{eq:tilde_f_equ}
\begin{align}
\tilde{f}^h_p \left(\tilde{c}_p^{h}(f) - \tilde{c}_{p'}^{h}(f)\right) &\leq 0, \\ \tilde{f}^a_p \left(\tilde{c}_p^{a}(\tilde{f}) - \tilde{c}_{p'}^{a}(\tilde{f}) \right) &\leq 0. \label{eq:second_eq}
\end{align}
\end{subequations}
\pedit{Note that these are precisely the equilibrium conditions for the
  auxiliary game $(G,\tilde{r},\tilde{c})$, which proves our claim that
  $\tilde{f}$ is an equilibrium for the auxiliary game.}
%Since~\eqref{eq:tilde_f_equ} are the equilibrium conditions for the auxiliary
%game $(G,\tilde{r},\tilde{c})$, we can conclude that the flow vector $\tilde{f}$
%constructed from the equilibrium flow vector of the original game $f$, is an
%equilibrium for the auxiliary.  This proves our claim.

Now, note that the conditions of Proposition~\ref{prop:unique} hold for the
auxiliary game $(G,\tilde{r},\tilde{c})$ since for every link $l\in \LL$, the
link traversal costs of human--driven and autonomous cars $\tilde{c}_l^h$ and
$\tilde{c}_l^a$ are strictly increasing functions of the total link flow
$\tilde{f}_l = \tilde{f}_l^h+\tilde{f}_l^a$. Moreover, \pedit{motivated by
  \eqref{eq:c_tilde},} the costs of human-driven
and autonomous cars are identical up to a constant. Thus, using
Proposition~\ref{prop:unique}  for $(G,\tilde{r},\tilde{c})$, for every link $l \in \LL$,  the total link flow
$\tilde{f}_l = \tilde{f}_l^h+\tilde{f}_l^a$ is unique \pedit{among all the equilibria}. Therefore, using the
definition of $\tilde{f}$, the fact that $\tilde{f}$ is an equilibrium flow
vector for $(G,\tilde{r},\tilde{c})$, and the connection between $f$ and
$\tilde{f}$, we conclude that at every link $l \in \cc{L}$, we must have that
$f_l^h+ \mu f_l^a$ is unique for all equilibria of $(G,r,c)$. Additionally,
from~\eqref{eq:e_tilde} and~\eqref{eq:c_tilde}, for every link $l \in \LL$,
uniqueness of \pedit{the} total link flow at equilibrium in the auxiliary game implies that
the link traversal costs ${\tilde{c}}^h_l (\tilde{f}_l^h,\tilde{f}_l^a)$ and
${\tilde{c}}^a_l(\tilde{f}_l^h,\tilde{f}_l^a)$  are unique. Hence,
from~\eqref{eq:cost_equi}, we can conclude that in $(G,r,c)$, for each link $l
\in \LL$, the link traversal costs $c_l^h(f_l^h,f_l^a)$ and $c_l^a(f_l^h,f_l^a)$
are also unique for all \pcomment{removed ``Wardrop''} equilibrium flow vectors $f$. Thus, for each O/D
pair $w \in \WW$,~\eqref{eq:eq_cost_travel} results in uniqueness of travel
costs of both human-driven and autonomous cars $c_w^h$ and $c_w^a$.
Consequently, from~\eqref{eq:social-cost-travel-cost}, we realize that the
overall cost $C(f)$  is unique for all equilibrium flow vectors $f$ of $(G,r,c)$.

 Now, using~\eqref{eq:path_delay},~\eqref{eq:link_cost}, and~\eqref{eq:total-cost}, we can rewrite the social cost of $(G,r,c)$ as
\begin{align}
    C(f) &= \sum_{l \in \LL}  f_l^h c_l^{h} (f) + f_l^a c_l^{a} (f) \\
    &= \sum_{l \in \LL}  (f_l^h+f_l^a) e_l^{h} (f)+ f_l^h \tau_l^h + f_l^a \tau_l^a \\
    &= J(f) + \sum_{l\in \LL} f_l^h \tau_l^h + f_l^a \tau_l^a. \label{eq:soc_cost_struc}
\end{align}
Notice that under homogeneity of the network, using the special structure
of~\eqref{eq:link_delay_fun},  \pedit{it is easy to see that} Equation~\eqref{eq:link-prices-theor} implies that for every link $l \in \LL$, we have
\begin{align}\label{eq:price-struc}
    \tau_l^a = \mu \tau_l^h.
\end{align}
Substituting~\eqref{eq:price-struc} in~\eqref{eq:soc_cost_struc}, we have
\begin{align}\label{eq:social-cost-simp}
    C(f) = J(f) + \sum_{l\in \LL} \tau_h^l (f_h^l + \mu f_l^a). 
\end{align}
Note that using our introduced auxiliary game, we proved that the overall cost
$C(f)$ is unique for all \pcomment{removed ``Wardrop''} equilibrium flow vectors $f$. Furthermore,
we proved that for every link $l$, \pedit{$f_l^h+\mu f_l^a = \tilde{f}_l^h + \tilde{f}_l^a$} is unique for all \pcomment{removed ``Wardrop''} equilibria. Hence, from~\eqref{eq:social-cost-simp}, we can conclude that the social delay $J(f)$ is also unique for all equilibrium flow vectors. This completes our proof.
\end{proof}

Note that if for each link $l\in \LL$, the link prices are obtained from~\eqref{eq:link-prices-theor}, since the link delay function~\eqref{eq:link_delay_fun} is increasing in the flow of each vehicle class, the prices that result from~\eqref{eq:link_delay_fun} are always nonnegative which is in accordance with our initial assumption. The link prices obtained by~\eqref{eq:link-prices-theor} are in fact the extra term in the marginal cost of each vehicle class. 

% For every link $l \in \LL$, since the impact of human--driven vehicles is reflected via $\frac{f_l^r}{m_l}$ in~\eqref{eq:link_delay_fun}, while the impact of autonomous vehicles is captured via $\frac{f_l^a}{M_l}$ in~\eqref{eq:link_delay_fun}, the marginal prices of these two types of vehicles are different. 
%We proved that when a traffic network has homogeneous degrees of capacity asymmetry, the social delay of all induced equilibrium is equal to the optimal social delay. 

\subsection{Heterogeneous Networks}
\pcomment{one idea is to remove this short subsection?}
%\begin{figure}
%\centering
%\begin{tikzpicture}[scale=1.2]
%\begin{scope}[every node/.style={circle,thick,draw}]
%    \node (A) at (0,0) {A};
%    \node (B) at (3,1.5) {B};
%    \node (C) at (3,-1.5) {C};
%    \node (D) at (6,0) {D};
    
%\end{scope}

%\begin{scope}[>={Stealth[teal]},
%              every node/.style={fill=white,circle},
%              every edge/.style={draw=teal,very thick}]
%    \path [->] (A) edge node {$1$} (B);
%    \path [->] (A) edge node {$3$} (C);
%    \path [->] (B) edge node {$2$} (D);
%    \path [->] (C) edge node {$4$} (D);
%    \path [->] (B) edge node {$5$} (C);
   %\path [->] (A) edge node {$4$} (D);
    %\path [->] (D) edge node {$3$} (C);
    %\path [->] (A) edge node {$3$} (E);
    %\path [->] (D) edge node {$3$} (E);
    %\path [->] (D) edge node {$3$} (F);
    %\path [->] (C) edge node {$5$} (F);
    %\path [->] (E) edge node {$8$} (F); 
    %\path [->] (B) edge[bend right=60] node {$1$} (E); 
%\end{scope}

%\end{tikzpicture}
%\caption{A network with a single O/D pair and three paths from A to D.}
%\label{fig:Braes}
%\end{figure}

If the road degree of capacity asymmetry is not homogeneous, \pedit{but} a
central authority sets link prices to be obtained
from~\eqref{eq:link-prices-theor}, there still exists at least one induced
equilibrium flow vector \pedit{that achieves the} minimum social delay. However, for heterogeneous networks, the social delay is not necessarily unique. Therefore, although the social delay of one induced equilibrium is optimal, the social delay of other induced equilibria might not be optimal. %Therefore, there is no guarantee that all induced equilibria will have  a socially optimal delay. 
For such networks, optimally of the social delay may not be achieved in all induced equilibria by setting the prices to be obtained from~\eqref{eq:link-prices-theor}.
%Therefore, there is no theoretical guarantee that the social delay of all induced equilibiria is optimal.

%also talk about path based priceing

%We discuss the complex behavior of heterogeneous networks in the presence of price via the following example.

%Note that in our mixed autonomy setting, in general, there are multiple equilibria, and the network social delay is not necessarily unique . When link prices are set according to~\eqref{eq:link_prices_het_reg_simp} and~\eqref{eq:link_prices_het_aut_simp}, there exists at least one induced equilibrium with minimum social delay; however, the social delay might not be optimal at all system equilibria. Indeed, social delay might even get worse at some system equilibria. 

%%% Local Variables:
%%% mode: latex
%%% TeX-master: "sample"
%%% End:
\label{sec:pricing}
%%%%%%%%%%%%%%%%%%%%%%%%%%%%%%%%%%%%%%%%%%%%%%%%%%%%%%%%%%%%%%%%%%%%%%%%%%%%%%%%
%\section{Example}
%\input{example.tex}
%%%%%%%%%%%%%%%%%%%%%%%%%%%%%%%%%%%%%%%%%%%%%%%%%%%%%%%%%%%%%%%%%%%%%%%%%%%%%%%%
\section{Conclusion and Future Work}
We considered the problem of inducing efficient equilibria in traffic networks with mixed vehicle autonomy via pricing. We showed that minimum social delay may not be attained by imposing undifferentiated link prices, in which human--driven and autonomous vehicles are treated identically.  Then, we proved that in mixed--autonomy traffic networks with a homogeneous degree of capacity asymmetry, if differentiated prices are allowed, which treat human--driven and autonomous vehicles differently, link prices can be determined such that all induced equilibria have minimum social delay. For future steps, it is interesting to study path--based price collection to achieve further objectives such as collecting the minimum possible monetary value or obtaining a fair pricing policy. It is also important to study the existence of prices when users are heterogeneous, i.e. users that might value the monetary costs of prices differently. \label{sec:concl}
%\input{singleOD_equ.tex}
%%%%%%%%%%%%%%%%%%%%%%%%%%%%%%%%%%%%%%%%%%%%%%%%%%%%%%%%%%%%%%%%%%%%%%%%%%%%%%%%
%\input{multipleOD_equ.tex}
%%%%%%%%%%%%%%%%%%%%%%%%%%%%%%%%%%%%%%%%%%%%%%%%%%%%%%%%%%%%%%%%%%%%%%%%%%%%%%%%
%\input{example.tex}
%%%%%%%%%%%%%%%%%%%%%%%%%%%%%%%%%%%%%%%%%%%%%%%%%%%%%%%%%%%%%%%%%%%%%%%%%%%%%%%%
%\input{case_study.tex}
%%%%%%%%%%%%%%%%%%%%%%%%%%%%%%%%%%%%%%%%%%%%%%%%%%%%%%%%%%%%%%%%%%%%%%%%%%%%%%%%
%\input{conclusion.tex}
%%%%%%%%%%%%%%%%%%%%%%%%%%%%%%%%%%%%%%%%%%%%%%%%%%%%%%%%%%%%%%%%%%%%%%%%%%%%%%%%
\section*{Acknowledgments}
This work is supported by the National Science Foundation under Grant CPS 1545116.

%\begin{appendices}
\appendix

\section{Proof of Proposition~\ref{prop:marginal_price}}\label{sec:appendix}
Since the social delay defined by~\eqref{eq:social_delay} is a continuous function of flows along network paths, and the set of feasible flows satisfying~\eqref{eq:flow_cons} is compact, there exists a flow vector $f^*$ that optimizes the social delay, i.e. $f^*$ is the optimizer of the following optimization problem

\begin{equation}\label{eq:social-delay-optimizer}
\begin{aligned}
& \underset{f}{\min}
& & J(f) \\
& \text{subject to}
& & \forall p \in \PP: f_p^h \geq 0,\, f_p^a \geq 0, \\
 &&&  \forall w \in \WW: \sum_{p \in \PP_w} f_p^h = r_w^h, \sum_{p \in \PP_w} f_p^a = r_w^a.
% &&& \forall l \in \LL: \T_l^h = \T_l^a.
\end{aligned}
\end{equation}

% \begin{equation}\label{eq:social-delay-optimizer}
% \begin{aligned}
% & \underset{f}{\text{minimize}}
% & & J(f) \\
% & \text{subject to}
% & & \forall w \in \WW: \sum_{p \in \PP_w} f_p^h = r_w^h,\, \text{and} \sum_{p \in \PP_w} f_p^a = r_w^a, \\
% &&& \forall p \in \PP: f_p^h \geq 0,\, f_p^a \geq 0.  \\
% \end{aligned}
% \end{equation}

Note that all constraints in~\eqref{eq:social-delay-optimizer} are affine
functions of the decision variable $f$. Therefore, the regularity conditions
hold for~\eqref{eq:social-delay-optimizer} \pedit{(see, for instance,
  Theorem~5.1.3 and Lemma~5.1.4 in \cite{bazaraa2013nonlinear}).}  Thus, the
optimizer $f^*$ must satisfy \pedit{the} KKT  conditions. For each path $p \in \PP$, let
$\lambda_p^h \geq 0$ and $\lambda_p^a \geq 0$  be the Lagrange
multipliers associated with the nonnegativity constraint of the flows of
human--driven and autonomous vehicles along path $p$, respectively. Similarly, for  each O/D
pair $w\in \WW$,
let $\nu_w^h$ and $\nu_w^a$  be the Lagrange multipliers associated with
\pedit{the} flow
conservation constraints for human--driven and autonomous vehicles along the O/D pair
$w$, respectively. Then, for a fixed path $p \in \PP_w$ \pedit{associated to} an O/D pair $w \in \WW$, for the flow of human--driven cars, the stationarity condition imposes the following
\begin{align}\label{eq:stationariy}
\frac{\partial}{\partial f_p^h} J(f)\bigg|_{f^*} &= \lambda_p^h - \nu_w^h.  
% \frac{\partial}{\partial f_p^a} J(f)\bigg|_{f^*} &= \lambda_p^a - \nu_w^a 
\end{align}
From~\eqref{eq:path_delay},~\eqref{eq:social_delay},
and~\eqref{eq:link-prices-theor}, we have \pcomment{note subscripts should be
  $l$ not $p$ on the second line}
\begin{align*}
\begin{split}
\frac{\partial}{\partial f_p^h} J(f)\bigg|_{f^*} &= \\ \sum_{l\in \LL: l\in p} \Big( & e_l(f_l^h, f_l^a) + (f_{\pedit{l}}^h + f_{\pedit{l}}^a) \frac{\partial}{\partial f_{\pedit{l}}^h} e_l(f_l^h,f_l^a) \Big) \bigg|_{f^*} 
\end{split}\\
&= e_p(f^*)+\tau_p^h.
\end{align*}
Using this together with~\eqref{eq:stationariy}, we can conclude that for every path $p \in \PP$, we have
\begin{align}\label{eq:stationarity-simpl}
    e_p(f^*)+\tau_p^h = \lambda_p^h - \nu_w^h.
\end{align}
On the other hand, complementary slackness requires that for every path $p\in\PP$
\begin{align}\label{eq:compl-slack}
    \lambda_p^h f_p^h = 0.
\end{align}
Now, for a fixed O/D pair $w \in \WW$, consider a pair of paths $p,p'\in \PP_w$. If $f_p^h > 0$, from~\eqref{eq:compl-slack}, we must have $\lambda_p^h = 0$. Then, from~\eqref{eq:stationarity-simpl} and nonnegativity of $\lambda_{p'}^h$, we have
\begin{align}\label{eq:equ-cond-hum}
e_p(f^*)+\tau_p^h = -\nu^h_w \leq \lambda^h_{p'}-\nu^h_w = e_{p'}(f^*)+\tau^h_{p'},
\end{align}
\pedit{where in the last equality, we have used~\eqref{eq:stationarity-simpl}
  for the path $p'$.}
Similarly, for autonomous cars along the two paths $p$ and $p'$, if $f_p^a > 0$, we must have
\begin{align}\label{eq:equ-cond-aut}
e_p(f^*)+\tau_p^a = -\nu^a_w \leq \lambda^a_{p'}-\nu^a_w = e_{p'}(f^*)+\tau^a_{p'}.
\end{align}
\pedit{Note that \eqref{eq:equ-cond-hum} implies~\eqref{eq:eq_def_h}, the reason
being that if $f_p^h=0$, \eqref{eq:eq_def_h} automatically holds, and if $f_p^h
> 0$, \eqref{eq:eq_def_h} holds since as we showed above, $c_p^h(f^*) \leq
c_{p'}^h(f^*)$. Likewise, \eqref{eq:equ-cond-aut} implies \eqref{eq:eq_def_a}.}
%Now, note that~\eqref{eq:equ-cond-hum} and~\eqref{eq:equ-cond-aut} are in fact
%the equilibrium conditions~\eqref{eq:eq_def}.
Hence, once prices are set according to~\eqref{eq:link-prices-theor}, the optimal flow $f^*$ is a Wardrop equilibrium for the game $(G,r,c)$; thus, there exists at least one induced equilibirum with minimum social delay once prices are obtained from~\eqref{eq:link-prices-theor}.

% Therefore, $f^*$ must satisfy the following
% \begin{align}
% \begin{split}
%     \nabla J(f^*) &= \\  -&\sum_{p \in \PP}\left( \lambda_p^h \nable g_p^h + \lambda_p^a \nabla g_a^h\right) - \sum_{w \in \WW}\left( \nu_w^h \nabla h_w^h + \nu_w^a \nabla h_w^a \right)
% \end{split}
% \end{align}

%\end{appendices}

%%%%%%%%%%%%%%%%%%%%%%%%%%%%%%%%%%%%%%%%%%%%%%%%%%%%%%%%%%%%%%%%%%%%%%%%%%%%%%%%

%\begin{thebibliography}{99}

%\bibitem{c1}
%J.G.F. Francis, The QR Transformation I, {\it Comput. J.}, vol. 4, 1961, pp 265-271.

%\bibitem{c2}
%H. Kwakernaak and R. Sivan, {\it Modern Signals and Systems}, Prentice Hall, Englewood Cliffs, NJ; 1991.

%\bibitem{c3}
%D. Boley and R. Maier, "A Parallel QR Algorithm for the Non-Symmetric Eigenvalue Algorithm", {\it in Third SIAM Conference on Applied Linear Algebra}, Madison, WI, 1988, pp. A20.

% \end{thebibliography}

%\bibliographystyle{ieeetr}
%\bibliography{negar_bib.bib}

\end{document}